\newcommand{\tinyspace}{\mspace{1mu}}
\newcommand{\norm}[1]{\left\lVert\tinyspace #1 \tinyspace\right\rVert}
\newcommand{\setft}[1]{\mathrm{#1}}
\newcommand{\density}[1]{\setft{D}\left(#1\right)}
\newcommand{\trans}[1]{\setft{T}\left(#1\right)}
\newcommand{\supp}{{\operatorname{supp}}}
\def\real{\mathbb{R}}
\def\I{\mathbb{1}}
\newenvironment{mylist}[1]{\begin{list}{}{
    \setlength{\leftmargin}{#1}
    \setlength{\rightmargin}{0mm}
    \setlength{\labelsep}{2mm}
    \setlength{\labelwidth}{8mm}
    \setlength{\itemsep}{0mm}}}
    {\end{list}}
\def\ot{\otimes}
\newcommand{\out}[2]{| #1\rangle\langle #2 |}
\newcommand{\defeq}{\stackrel{\smash{\textnormal{\tiny def}}}{=}}
\newcommand{\Pa}[1]{\left(#1\right)}
\newcommand{\Br}[1]{\left[#1\right]}
\newcommand{\set}[1]{\{#1\}}
\newcommand{\Set}[1]{\left\{#1\right\}}
\newcommand{\ket}[1]{|#1\rangle}
\DeclareMathOperator{\trace}{Tr}
\newcommand{\Ptr}[2]{\trace_{#1}\Pa{#2}}
\newcommand{\Tr}[1]{\Ptr{}{#1}}
\def\cH{\mathcal{H}}
\def\cM{\mathcal{M}}
\def\rS{\mathrm{S}}
\newtheorem{thrm}{Theorem}[section]
\newtheorem{cor}[thrm]{Corollary}
\theoremstyle{definition}
\newtheorem{remark}[thrm]{Remark}
\numberwithin{equation}{section}
\newcounter{questionnumber}
\begin{document}

\title{\Large Conditional mutual information and self-commutator}

\author{Lin Zhang\footnote{E-mail: godyalin@163.com}\\
  {\it\small Institute of Mathematics, Hangzhou Dianzi University, Hangzhou 310018, PR~China}}

\date{}
\maketitle \mbox{}\hrule\mbox\\
\begin{abstract}

A simpler approach to the characterization of vanishing conditional mutual information is presented. Some remarks are given as well. More specifically, relating the conditional mutual information to a commutator is a very promising approach towards the approximate version of SSA. That is, it is \emph{conjectured} that small conditional mutual information implies small perturbation of quantum Markov chain.

\end{abstract}
\mbox{}\hrule\mbox\\

\section{Introduction}

To begin with, we fix some notations that will be used in this context.
Let $\cH$ be a finite dimensional complex Hilbert space. A
\emph{quantum state} $\rho$ on $\cH$ is a positive semi-definite
operator of trace one, in particular, for each unit vector
$\ket{\psi} \in \cH$, the operator $\rho = \out{\psi}{\psi}$ is said
to be a \emph{pure state}. The set of all quantum states on $\cH$ is
denoted by $\density{\cH}$. For each quantum state
$\rho\in\density{\cH}$, its von Neumann entropy is defined by
$$
\rS(\rho) \defeq - \Tr{\rho\log\rho}.
$$
The \emph{relative entropy} of two mixed states $\rho$ and $\sigma$ is
defined by
$$
\rS(\rho||\sigma) \defeq \left\{\begin{array}{ll}
                             \Tr{\rho(\log\rho -
\log\sigma)}, & \text{if}\ \supp(\rho) \subseteq
\supp(\sigma), \\
                             +\infty, & \text{otherwise}.
                           \end{array}
\right.
$$
A \emph{quantum channel} $\Phi$ on $\cH$ is a trace-preserving completely positive linear
map defined over the set $\density{\cH}$. It follows that there exists linear
operators $\set{K_\mu}_\mu$ on $\cH$ such that $\sum_\mu
K^\dagger_\mu K_\mu = \I$ and $\Phi = \sum_\mu \mathrm{Ad}_{K_\mu}$,
that is, for each quantum state $\rho$, we have the Kraus
representation
\begin{eqnarray*}
\Phi(\rho) = \sum_\mu K_\mu \rho K^\dagger_\mu.
\end{eqnarray*}

The celebrated strong subadditivity (SSA) inequality of quantum
entropy, proved by Lie and Ruskai in
\cite{Lieb1973},
\begin{eqnarray}\label{eq:SSA-1} \rS(\rho_{ABC}) +
\rS(\rho_B) \leqslant \rS(\rho_{AB}) + \rS(\rho_{BC}),
\end{eqnarray}
is a very powerful tool in quantum information theory. Recently, the operator extension of SSA is obtained by Kim in \cite{Kim2012}. Following the line of Kim, Ruskai gives a family of new operator inequalities in \cite{Ruskai2012}.

Conditional mutual information, measuring the correlations of two quantum
systems relative to a third, is defined as follows: Given a tripartite state $\rho_{ABC}$, it is defined as
\begin{eqnarray}
I(A:C|B)_\rho \defeq \rS(\rho_{AB})+ \rS(\rho_{BC}) - \rS(\rho_{ABC}) - \rS(\rho_B).
\end{eqnarray}
Clearly conditional mutual information is nonnegative by SSA.

Ruskai is the first one to discuss the equality condition of SSA. By analyzing the equality condition of Golden-Thompson inequality, she obtained the following characterization \cite{Ruskai2002}:
\begin{eqnarray}
I(A:C|B)_\rho = 0 \Longleftrightarrow \log\rho_{ABC} + \log\rho_B = \log\rho_{AB} + \log\rho_{BC}.
\end{eqnarray}

Later on, using the relative modular approach established by Araki, Petz gave another characterization of the equality condition of SSA \cite{Petz2003}:
\begin{eqnarray}
I(A:C|B)_\rho = 0 \Longleftrightarrow \rho^{\mathrm{i}t}_{ABC}\rho^{-\mathrm{i}t}_{BC} = \rho^{\mathrm{i}t}_{AB} \rho^{-\mathrm{i}t}_B\quad(\forall t\in\real),
\end{eqnarray}
where $\mathrm{i} = \sqrt{-1}$ is the imaginary unit.

Hayden \emph{et al.} in \cite{Hayden2004} showed that $I(A:C|B)_\rho =0$ if and only if the following
conditions hold:
\begin{enumerate}[(i)]
\item $\cH_B = \bigoplus_k \cH_{b^L_k} \ot \cH_{b^R_k}$,
\item $\rho_{ABC} = \bigoplus_k  p_k \rho_{Ab^L_k} \ot \rho_{b^R_kC}$, where $\rho_{Ab^L_k}\in\density{\cH_A \ot \cH_{b^L_k}}, \rho_{b^R_kC} \in \density{\cH_{b^R_k} \ot\cH_C}$
for each index $k$; and $\set{p_k}$ is a probability distribution.
\end{enumerate}

In \cite{Brandao2011}, Brand\~{a}o \emph{et al.} first obtained the following lower bound for $I(A:C|B)_\rho$:
\begin{eqnarray}
I(A:C|B)_\rho \geqslant \frac1{8\ln 2} \min_{\sigma_{AC}\in\mathbb{SEP}}\norm{\rho_{AC} - \sigma_{AC}}^2_{1-\mathbb{LOCC}},
\end{eqnarray}
where
$$
\norm{\rho_{AC} - \sigma_{AC}}^2_{1-\mathbb{LOCC}} \defeq \sup_{\cM\in1-\mathbb{LOCC}}\norm{\cM(\rho_{AC}) - \cM(\sigma_{AC})}_1.
$$
Based on this result, he cracked a \emph{long-standing} open problem in quantum information theory. That is, the squashed entanglement is \emph{faithful}. Later, Li in \cite{Li2012} gave another approach to study the same problem and improved the lower bound for $I(A:C|B)_\rho$:
\begin{eqnarray}
I(A:C|B)_\rho \geqslant \frac1{2\ln 2} \min_{\sigma_{AC}\in\mathbb{SEP}}\norm{\rho_{AC} - \sigma_{AC}}^2_{1-\mathbb{LOCC}}.
\end{eqnarray}

Along with the above line, Ibinson \emph{et al.} in \cite{Ibinson2008} studied the robustness of quantum Markov chains, i.e. the perturbation of states of vanishing conditional mutual information. In order to study it further, We need to employ the following famous characterization of saturation of monotonicity inequality of relative entropy.

\begin{thrm}[Petz, \cite{Petz1988,Hiai2011}]\label{th:relative-entropy-petz}
Let $\rho,\sigma\in\density{\cH}$, $\Phi$ be a quantum channel defined over $\cH$. If $\supp(\rho)\subseteq\supp(\sigma)$, then
\begin{eqnarray}
\rS(\rho||\sigma) = \rS(\Phi(\rho)||\Phi(\sigma))\quad\text{if and only if}\quad \Phi^\dagger_\sigma\circ\Phi(\rho) = \rho,
\end{eqnarray}
where $\Phi^\dagger_\sigma = \mathrm{Ad}_{\sigma^{1/2}}\circ\Phi^\dagger\circ\mathrm{Ad}_{\Phi(\sigma)^{-1/2}}$.
\end{thrm}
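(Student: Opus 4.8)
The plan is to build everything around the Petz recovery map $\Phi^\dagger_\sigma = \mathrm{Ad}_{\sigma^{1/2}}\circ\Phi^\dagger\circ\mathrm{Ad}_{\Phi(\sigma)^{-1/2}}$ and two of its features. First I would record that $\Phi^\dagger_\sigma$ is itself a quantum channel on $\supp(\Phi(\sigma))$: it is completely positive because $\Phi^\dagger$ is, and trace preservation there follows from cyclicity of the trace together with the unitality $\Phi^\dagger(\I)=\I$ (the adjoint form of trace preservation of $\Phi$). Second, I would check that it recovers $\sigma$ exactly: conjugating $\Phi(\sigma)$ by $\Phi(\sigma)^{-1/2}$ collapses it to the identity on its support, so $\Phi^\dagger_\sigma(\Phi(\sigma)) = \sigma^{1/2}\Phi^\dagger(\I)\sigma^{1/2} = \sigma$. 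Note also that $\supp(\rho)\subseteq\supp(\sigma)$ forces $\supp(\Phi(\rho))\subseteq\supp(\Phi(\sigma))$, so all inverses are taken on the correct support. Throughout I take for granted the data-processing inequality $\rS(\rho||\sigma)\geqslant\rS(\Phi(\rho)||\Phi(\sigma))$, whose saturation is the object of study.

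Given this, the implication ``$\Phi^\dagger_\sigma\circ\Phi(\rho)=\rho\Rightarrow$ equality'' is the easy half. Applying data-processing to the channel $\Phi^\dagger_\sigma$ and then inserting the recovery identities $\Phi^\dagger_\sigma(\Phi(\rho))=\rho$ and $\Phi^\dagger_\sigma(\Phi(\sigma))=\sigma$ produces the sandwich
\[
\rS(\rho||\sigma) = \rS(\Phi^\dagger_\sigma\Phi(\rho)||\Phi^\dagger_\sigma\Phi(\sigma)) \leqslant \rS(\Phi(\rho)||\Phi(\sigma)) \leqslant \rS(\rho||\sigma),
\]
which forces both monotonicity inequalities to be equalities; in particular $\rS(\rho||\sigma)=\rS(\Phi(\rho)||\Phi(\sigma))$.

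The substantial half is the converse, for which I would pass to Petz's relative modular formalism. On the Hilbert--Schmidt space $\Lin{\cH}$ introduce the relative modular operator $\Delta_{\sigma,\rho}\colon X\mapsto \sigma X\rho^{-1}$ and represent relative entropy as $\rS(\rho||\sigma) = -\langle\rho^{1/2},(\log\Delta_{\sigma,\rho})\,\rho^{1/2}\rangle$. Monotonicity is then witnessed by a contraction $V$ assembled from $\Phi$ and $\Phi^\dagger$ that intertwines the two modular pictures, combined with operator concavity of the logarithm through the operator Jensen inequality. Equality in data-processing means the resulting Jensen estimate is saturated against the vector $\rho^{1/2}$; by strict operator concavity of $\log$ on the spectrum of the operator involved, this upgrades to an exact intertwining, morally $\log\Delta_{\sigma,\rho}$ matched with $\log\Delta_{\Phi(\sigma),\Phi(\rho)}$ along $V$. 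Translating that Hilbert--Schmidt identity back into a relation among $\rho,\sigma,\Phi(\rho),\Phi(\sigma)$ is designed to reproduce exactly the recovery equation $\Phi^\dagger_\sigma\circ\Phi(\rho)=\rho$.

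The hard part will be precisely this rigidity step: turning saturation of an inequality that a priori holds only in expectation against $\rho^{1/2}$ into the \emph{exact} operator equation in the theorem. One must argue that the equality is strong enough to commute $V$ past the logarithm, and then carry out the bookkeeping that converts the modular intertwiner into the stated form of $\Phi^\dagger_\sigma$, all while respecting the non-commutativity of $\rho$ and $\sigma$. A cleaner route I would also weigh is to factor $\Phi$ through its Stinespring isometry, reducing monotonicity to invariance under isometries plus monotonicity under a partial trace; the equality analysis then localizes to the partial-trace step, where the block structure can be read off more explicitly, though the final translation back to $\Phi^\dagger_\sigma$ still carries the same non-commutative overhead.
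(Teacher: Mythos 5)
First, the ground truth here: the paper never proves this statement. It is quoted as Petz's theorem and imported as a black box from \cite{Petz1988,Hiai2011} in order to derive Theorem~\ref{th:SSA-with-equality}, so there is no in-paper proof to compare yours against; the relevant benchmark is the argument in the cited references, which is exactly the relative-modular-operator route you outline. Measured against that, your forward implication ($\Phi^\dagger_\sigma\circ\Phi(\rho)=\rho$ implies equality) is complete and correct: $\Phi^\dagger_\sigma$ is completely positive, trace-preserving on operators supported in $\supp(\Phi(\sigma))$, it satisfies $\Phi^\dagger_\sigma(\Phi(\sigma))=\sigma$, and the double data-processing sandwich forces equality. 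One detail to tighten: conjugating $\Phi(\sigma)$ by $\Phi(\sigma)^{-1/2}$ produces the support projection $P$ of $\Phi(\sigma)$, not $\I$, so $\Phi^\dagger_\sigma(\Phi(\sigma)) = \sigma^{1/2}\Phi^\dagger(P)\sigma^{1/2}$; you recover $\sigma$ by noting that $\sigma^{1/2}\Phi^\dagger(\I-P)\sigma^{1/2}$ is a positive operator whose trace equals $\Tr{\Phi(\sigma)(\I-P)} = 0$, hence it vanishes.

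The converse, however, is where the entire mathematical content of the theorem lives, and there your text is a strategy rather than a proof --- your own wording (``morally'', ``is designed to reproduce'') concedes as much. The missing step is precisely the rigidity argument: equality of the two relative entropies is a single scalar identity, i.e.\ saturation of the operator Jensen/Schwarz estimate only \emph{in expectation} against the one vector $\rho^{1/2}$ (equivalently $\Phi(\rho)^{1/2}$ on the other side of your contraction $V$). From this you must extract, for every real $t$, the vector identity $V\Delta^{\mathrm{i}t}_{\Phi(\sigma),\Phi(\rho)}\Phi(\rho)^{1/2} = \Delta^{\mathrm{i}t}_{\sigma,\rho}\,\rho^{1/2}$ --- this requires the equality analysis in the Schwarz inequality for 2-positive maps (the multiplicative-domain argument carried out in \cite{Hiai2011}), not merely an appeal to strict concavity of $\log$ --- and then continue analytically in $t$ (to $t=-\mathrm{i}/2$) to land on $\Phi^\dagger_\sigma\circ\Phi(\rho)=\rho$. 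None of this is executed in your proposal, so as it stands you have proved one implication and correctly mapped, but not closed, the other.
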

Noting that the equivalence between monotonicity of relative entropy and SSA, the above theorem, in fact, gives another characterization of vanishing conditional mutual information of quantum states.

\section{Main result}

In this section, we give another characterization of saturation of SSA from the perspective commutativity.
\begin{thrm}\label{th:SSA-with-equality}
Let $\rho_{ABC}\in\density{\cH_A\ot\cH_B\ot\cH_C}$. Denote
\begin{eqnarray*}
M &\defeq& (\rho^{1/2}_{AB}\ot\I_C)(\I_A\ot\rho^{-1/2}_B\ot\I_C)(\I_A\ot\rho^{1/2}_{BC})\\
&\equiv& \rho^{1/2}_{AB}\rho^{-1/2}_B\rho^{1/2}_{BC}.
\end{eqnarray*}
Then the following conditions are equivalent:
\begin{enumerate}[(i)]
\item The conditional mutual information is vanished, i.e. $I(A:C|B)_\rho = 0$;
\item $\rho_{ABC} = MM^\dagger = \rho^{1/2}_{AB}\rho^{-1/2}_B\rho_{BC}\rho^{-1/2}_B \rho^{1/2}_{AB}$;
\item $\rho_{ABC} = M^\dagger M = \rho^{1/2}_{BC}\rho^{-1/2}_B\rho_{AB}\rho^{-1/2}_B \rho^{1/2}_{BC}$;
\end{enumerate}
\end{thrm}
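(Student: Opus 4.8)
The plan is to reduce the statement to the equality case of the monotonicity of relative entropy under a partial trace and then invoke Petz's Theorem~\ref{th:relative-entropy-petz}. The bridge is the elementary identity, obtained by writing $\rS(\rho||\sigma) = -\rS(\rho) - \Tr{\rho\log\sigma}$ and using $\Tr{\rho_{ABC}(\I_A\ot\log\rho_{BC})} = \Tr{\rho_{BC}\log\rho_{BC}}$,
\begin{eqnarray*}
I(A:C|B)_\rho & = & \rS\Pa{\rho_{ABC}\,||\,\I_A\ot\rho_{BC}} - \rS\Pa{\rho_{AB}\,||\,\I_A\ot\rho_B}.
\end{eqnarray*}
Taking $\Phi\defeq\trace_C$ we have $\Phi(\rho_{ABC}) = \rho_{AB}$ and $\Phi(\I_A\ot\rho_{BC}) = \I_A\ot\rho_B$, so the right-hand side is precisely the monotonicity gap $\rS(\rho||\sigma) - \rS(\Phi(\rho)||\Phi(\sigma))$ for $\rho = \rho_{ABC}$ and $\sigma = \I_A\ot\rho_{BC}$. (The scale of $\sigma$ is irrelevant: multiplying $\sigma$ by a positive constant shifts both relative entropies by the same amount and leaves $\Phi^\dagger_\sigma$ invariant, so one may equally use the normalized state $\tfrac1{d_A}\I_A\ot\rho_{BC}$.)

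First I would check the hypothesis $\supp(\rho_{ABC})\subseteq\supp(\I_A\ot\rho_{BC})$ of Theorem~\ref{th:relative-entropy-petz}. This is automatic: if $\ket{\phi}_{BC}\in\ker(\rho_{BC})$ and $\set{\ket{i}_A}$ is an orthonormal basis, then $\sum_i\bra{i,\phi}\rho_{ABC}\ket{i,\phi} = \bra{\phi}\rho_{BC}\ket{\phi} = 0$, and positivity of $\rho_{ABC}$ forces every term to vanish, so $\cH_A\ot\ker(\rho_{BC})\subseteq\ker(\rho_{ABC})$. Hence $I(A:C|B)_\rho = 0$ is equivalent to saturation, and Theorem~\ref{th:relative-entropy-petz} turns this into the recovery equation $\Phi^\dagger_\sigma(\Phi(\rho_{ABC})) = \rho_{ABC}$.

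The core step is to evaluate the recovery map. Since $\Phi = \trace_C$, its adjoint is $\Phi^\dagger\colon X_{AB}\mapsto X_{AB}\ot\I_C$; together with $\Phi(\sigma)^{-1/2} = \I_A\ot\rho^{-1/2}_B$ and $\sigma^{1/2} = \I_A\ot\rho^{1/2}_{BC}$, applying the three maps $\mathrm{Ad}_{\Phi(\sigma)^{-1/2}}$, then $\Phi^\dagger$, then $\mathrm{Ad}_{\sigma^{1/2}}$ to $\rho_{AB}$ gives, with identity factors suppressed,
\begin{eqnarray*}
\Phi^\dagger_\sigma(\rho_{AB}) & = & \rho^{1/2}_{BC}\,\rho^{-1/2}_B\,\rho_{AB}\,\rho^{-1/2}_B\,\rho^{1/2}_{BC} \;=\; M^\dagger M.
\end{eqnarray*}
So the recovery equation is exactly $\rho_{ABC} = M^\dagger M$, proving (i)$\Leftrightarrow$(iii). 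Running the identical argument with $\Phi\defeq\trace_A$ and $\sigma\defeq\rho_{AB}\ot\I_C$ (so that $\Phi(\rho_{ABC}) = \rho_{BC}$ and $\Phi(\sigma) = \rho_B\ot\I_C$) yields $\Phi^\dagger_\sigma(\rho_{BC}) = \rho^{1/2}_{AB}\rho^{-1/2}_B\rho_{BC}\rho^{-1/2}_B\rho^{1/2}_{AB} = MM^\dagger$, hence (i)$\Leftrightarrow$(ii); the two instances are exchanged by the $A\leftrightarrow C$ symmetry of conditional mutual information.

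The hard part is purely the bookkeeping in the displayed computation rather than anything conceptual: one must correctly identify the adjoint of the partial trace as tensoring with the identity and then compose the three conjugations in the right order, verifying that the output is exactly $M^\dagger M$ (respectively $MM^\dagger$). A secondary point to handle with care is that $\rho^{-1/2}_B$ must be read as the generalized inverse on $\supp(\rho_B)$, so that $M$ and the products $M^\dagger M$, $MM^\dagger$ are well defined; the support computation in the second paragraph is what makes this harmless and guarantees that the relevant logarithms and inverses exist.
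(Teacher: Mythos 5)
Your proposal is correct and follows essentially the same route as the paper: both reduce vanishing conditional mutual information to saturation of monotonicity of relative entropy under the partial traces $\trace_C$ and $\trace_A$, and then apply Petz's Theorem~\ref{th:relative-entropy-petz} to convert saturation into the recovery equations $\rho_{ABC}=M^\dagger M$ and $\rho_{ABC}=MM^\dagger$. Your version is somewhat more careful on points the paper leaves implicit (the support hypothesis, the scale-invariance justifying the unnormalized reference operator $\I_A\ot\rho_{BC}$ in place of the paper's $\rho_A\ot\rho_{BC}$, and the explicit evaluation of the recovery map), but the underlying argument is the same.
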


\begin{proof}
Clearly, the conditional mutual information is vanished, i.e. $I(A:C|B)_\rho = 0$, if and only if
\begin{eqnarray}
\rS(\rho_{ABC}) + \rS(\rho_B) = \rS(\rho_{AB}) + \rS(\rho_{BC}).
\end{eqnarray}
Hence we have that
\begin{eqnarray}
\rS(\rho_{AB}||\rho_A\ot\rho_B) &=& \rS(\rho_{ABC}||\rho_A\ot\rho_{BC}),\\
\rS(\rho_{BC}||\rho_B\ot\rho_C) &=& \rS(\rho_{ABC}||\rho_{AB}\ot\rho_C).
\end{eqnarray}
Now let $\Phi=\trace_C$ and $\Psi=\trace_A$, it follows that
\begin{eqnarray}
\rS(\rho_{ABC}||\rho_A\ot\rho_{BC}) &=& \rS(\Phi(\rho_{ABC})||\Phi(\rho_A\ot\rho_{BC})),\label{eq:one-1}\\
\rS(\rho_{ABC}||\rho_{AB}\ot\rho_C) &=& \rS(\Psi(\rho_{ABC})||\Psi(\rho_{AB}\ot\rho_C)).\label{eq:one-2}
\end{eqnarray}
By Theorem~\ref{th:relative-entropy-petz}, we see that both Eq.~\eqref{eq:one-1} and Eq.~\eqref{eq:one-2} hold if and only if
\begin{eqnarray}
\rho_{ABC} = \Phi^\dagger_{\rho_A\ot\rho_{BC}} \circ\Phi(\rho_{ABC})\quad\text{and}\quad \rho_{ABC} = \Psi^\dagger_{\rho_{AB}\ot\rho_C} \circ\Psi(\rho_{ABC}),
\end{eqnarray}
i.e.
\begin{eqnarray}
\rho_{ABC} = \rho^{1/2}_{AB}\rho^{-1/2}_B\rho_{BC}\rho^{-1/2}_B \rho^{1/2}_{AB} = \rho^{1/2}_{BC}\rho^{-1/2}_B\rho_{AB}\rho^{-1/2}_B \rho^{1/2}_{BC}.
\end{eqnarray}
This amounts to say that $I(A:C|B)_\rho = 0$ if and only if $\rho_{ABC} = MM^\dagger = M^\dagger M$.
\end{proof}

\begin{remark}
In \cite{Leifer2008}, Leifer and Poulin gave a condition which is equivalent to our result. There they mainly focus on the characterization of conditional independence in terms of noncommutative probabilistic language by analogy with classical conditional independence. By combining the Lie-Trotter product formula:
\begin{eqnarray}
\exp(A+B) &=& \lim_{n\to\infty}\Br{\exp(A/n)\exp(B/n)}^n \\
&=& \lim_{n\to\infty}\Br{\exp(A/{2n})\exp(B/n)\exp(A/{2n})}^n,
\end{eqnarray}
where both $A$ and $B$ are square matrices of the same order, a characterization of vanishing conditional mutual information was obtained. Clearly, the Lie-Trotter product formula is not easy to deal with. In fact, our proof is more naturally and much simpler than that of theirs.
\end{remark}

In the following, we denote by $\Br{X,X^\dagger}$ the \emph{self-commutator} of an operator or a matrix $X$.

\begin{cor}\label{cor:zero-SSA-commutator}
With the notation mentioned above in Theorem~\ref{th:SSA-with-equality}, the following statement is true:
$I(A:C|B)_\rho = 0$ implies $\Br{M,M^\dagger} = 0$. In other words, $\Br{M,M^\dagger} \neq 0$ implies $I(A:C|B)_\rho \neq 0$.
\end{cor}

\begin{proof}
We assume that $I(A:C|B)_\rho = 0$. From Theorem~\ref{th:SSA-with-equality}, we know that $\rho_{ABC} = MM^\dagger = M^\dagger M$, implying $\Br{M,M^\dagger} = 0$.
\end{proof}

\section{Discussion}

A natural question arises: Can we derive $I(A:C|B)_\rho = 0$ from
$\Br{M,M^\dagger} = 0$? The answer is no! Indeed, we know from the
discussion in \cite{Winter} that, if the operators
$\rho_{AB},\rho_{BC}$ and $\rho_B$ all commute, then
\begin{eqnarray}\label{eq:Pinsker}
I(A:C|B)_\rho = \rS(\rho_{ABC}||MM^\dagger).
\end{eqnarray}
Now let $\rho_{ABC} = \sum_{i,j,k} p_{ijk}\out{ijk}{ijk}$ with
$\set{p_{ijk}}$ being an arbitrary joint probability distribution.
Thus
$$
MM^\dagger = M^\dagger M = \sum_{i,j,k}
\tfrac{p_{ij}p_{jk}}{p_j}\out{ijk}{ijk},
$$
where $p_{ij}=\sum_k p_{ijk},p_{jk}=\sum_ip_{ijk}$ and
$p_j=\sum_{i,k}p_{ijk}$ are corresponding marginal distributions,
respectively. But in general,
$p_{ijk}\neq\tfrac{p_{ij}p_{jk}}{p_j}$. Therefore we have a specific
example in which $[M,M^\dagger]=0$, and $\rho_{ABC}\neq MM^\dagger$,
i.e. $I(A:C|B)_\rho>0$. By employing the Pinsker's inequality to
Eq.~\eqref{eq:Pinsker}, it follows in this special case that
$$
I(A:C|B)_\rho \geqslant \frac1{2\ln2}\norm{\rho_{ABC} -
MM^\dagger}^2_1.
$$

Along with the above line, all tripartite states can be classified
into three categories:
$$
\density{\cH_A\ot\cH_B\ot\cH_C} = D_1\cup D_2 \cup D_3,
$$
where
\begin{enumerate}[(i)]
\item $D_1 \defeq \Set{\rho_{ABC}:\rho_{ABC} = MM^\dagger, [M,M^\dagger]=0}$.
\item $D_2 \defeq \Set{\rho_{ABC}:\rho_{ABC} \neq MM^\dagger, [M,M^\dagger]=0}$.
\item $D_3 \defeq \Set{\rho_{ABC}:[M,M^\dagger]\neq0}$.
\end{enumerate}
For related topics please refer to \cite{Poulin2011,Brown2012}. The
result obtained in Corollary~\ref{cor:zero-SSA-commutator} can be
employed to discuss a small conditional mutual information. I. Kim
\cite{Kim} tries to give a universal proof of the following
inequality:
$$
I(A:C|B)_\rho \geqslant \frac1{2\ln2}\norm{\rho_{ABC} - MM^\dagger}^2_1.
$$
As a matter of fact, if the above inequality holds, then a similar inequality holds:
$$
I(A:C|B)_\rho \geqslant \frac1{2\ln2}\norm{\rho_{ABC} - M^\dagger M}^2_1.
$$
The validity or non-validity of both inequalities can be guaranteed
by Theorem~\ref{th:SSA-with-equality}. According to the numerical
computation by Kim, up to now, there are no states violating these
inequalities. Therefore we have the following \emph{conjecture}:
\begin{eqnarray}\label{conjecture}
I(A:C|B)_\rho \geqslant \frac1{2\ln2}\max\Set{\norm{\rho_{ABC} - MM^\dagger}^2_1,\norm{\rho_{ABC} - M^\dagger M}^2_1}.
\end{eqnarray}
We can connect the total amount of conditional mutual information contained in the tripartite state $\rho_{ABC}$ with the trace-norm of the commutator $\Br{M,M^\dagger}$ as follows: if the above conjecture holds, then we have
\begin{eqnarray}
I(A:C|B)_\rho \geqslant \frac1{8\ln2}\norm{\Br{M, M^\dagger}}^2_1,
\end{eqnarray}
but not vice versa. Even though the above conjecture is false, it is
still possible that this inequality is true.

In \cite{Winter}, the authors proposed the following
question: For a given quantum channel $\Phi\in\trans{\cH_A,\cH_B}$ and
states $\rho,\sigma\in\density{\cH_A}$, does there exist a quantum
channel $\Psi\in\trans{\cH_B,\cH_A}$ with
$\Psi\circ\Phi(\sigma)=\sigma$ and
\begin{eqnarray}
\rS(\rho||\sigma) \geqslant \rS(\Phi(\rho)||\Phi(\sigma)) +
\rS(\rho||\Psi\circ\Phi(\rho))?
\end{eqnarray}
The authors affirmatively answer this question in the classical
case. The quantum case is still open. Although the authors proved that the following inequality is not valid in general:
$$
\rS(\rho||\sigma) \ngeqslant \rS(\Phi(\rho)||\Phi(\sigma)) +
\rS(\rho||\Phi^\dagger_\sigma\circ\Phi(\rho))
$$
However, the following inequality may still be correct:
$$
\rS(\rho||\sigma) \geqslant \rS(\Phi(\rho)||\Phi(\sigma)) +
\frac1{2\ln2}\norm{\rho - \Phi^\dagger_\sigma\circ\Phi(\rho)}^2_1.
$$
In fact, if this modified inequality holds, then Eq.~\eqref{conjecture} will hold.

A future research may be directed to establish a connection between
conditional mutual information and almost commuting normal matrices
\cite{Rordam}.

\subsection*{Acknowledgement}

This project is supported by the Research Program of Hangzhou Dianzi University (KYS075612038). LZ acknowledges Matthew Leifer, David Poulin and Mark M. Wilde for drawing my attention to the References \cite{Leifer2008,Poulin2011}. The author also would like to thank Isaac H. Kim for valuable discussions. Minghua Lin's remarks are useful as well. Especially, thank F. Brand\~{a}o for drawing my attention to the problem concerning the approximate version of vanishing conditional mutual information.



\end{document}